\documentclass[11pt]{article}
\usepackage{fullpage}
\usepackage[T1]{fontenc}
\usepackage{lmodern}
\usepackage{amsmath,amssymb,amsthm,mathtools}
\usepackage{microtype}
\usepackage{enumitem}
\usepackage[colorlinks,citecolor=blue,bookmarks=true,linktocpage]{hyperref}
\usepackage{fancyhdr}
\usepackage{needspace}
\setlist[enumerate]{leftmargin=*,itemsep=4pt,topsep=4pt}
\usepackage{tcolorbox}
\usepackage[backend=biber,url=false,maxbibnames=100,isbn=false,sortcites=true,style=trad-plain]{biblatex}
\newtheorem{theorem}{Theorem}
\newtheorem{lemma}[theorem]{Lemma}
\newtheorem{corollary}[theorem]{Corollary}
\newtheorem{proposition}[theorem]{Proposition}
\newcommand{\Tr}{\operatorname{Tr}}
\newcommand{\MD}{\operatorname{MD}}
\newcommand{\Fq}{\mathbb F_q}

\newcommand{\size}[1]{\lVert #1\rVert}
\newcommand{\bigoh}{\mathcal{O}}

\title{An ETH-based quasipolynomial lower bound for Dualization}
\author{
Yasuaki Kobayashi\thanks{Hokkaido University, email: \texttt{koba@ist.hokudai.ac.jp}}
\and Kazuhiro Kurita\thanks{Okayama University, email: \texttt{k-kurita@okayama-u.ac.jp}}
\and Kunihiro Wasa\thanks{Hosei University, email: \texttt{wasa@hosei.ac.jp}}
}

\usepackage[capitalize,poorman]{cleveref}
\crefname{theorem}{Theorem}{Theorems}
\crefname{lemma}{Lemma}{Lemmas}
\crefname{proposition}{Proposition}{Propositions}
\crefname{corollary}{Corollary}{Corollaries}

\AddToHook{env/lemma/begin}{%
  \crefalias{theorem}{lemma}}
\AddToHook{env/proposition/begin}{%
  \crefalias{theorem}{proposition}}
\AddToHook{env/corollary/begin}{%
  \crefalias{theorem}{corollary}}

\begin{document}
\maketitle
\thispagestyle{plain}

\begin{abstract}
Dualizing monotone Boolean functions (or equivalently, enumerating minimal transversals in hypergraphs) is a long-standing problem whose output-polynomial-time solvability remains open. 
While various special cases have been extensively studied, the state-of-the-art algorithm for the general case, due to Fredman and Khachiyan, runs in quasipolynomial time. 
This paper presents a subexponential-time reduction from \textsc{3SAT} to the complement of \textsc{Dual}: Given a 3CNF formula with $n$ variables, the reduction constructs hypergraphs $\mathcal H$ and $\mathcal L$ of total size $2^{\bigoh(n^{2/3}(\log n)^{1/3})}$ such that $\mathcal L \subseteq \Tr(\mathcal H)$ and the formula is satisfiable if and only if $\mathcal L\neq \Tr(\mathcal H)$, where $\Tr(\mathcal H)$ is the set of minimal transversals of $\mathcal H$.
As a consequence of this reduction, assuming the Exponential Time Hypothesis (ETH), neither \textsc{Dual} nor \textsc{Dualization} admits an algorithm running in $N^{o(\sqrt{\log N/\log\log N})}$ time, where $N$ is the input size for \textsc{Dual} and the combined input and output size for \textsc{Dualization}.
In particular, \textsc{Dualization} cannot be solved in output-polynomial time under ETH.
\end{abstract}

\section{Introduction}
\label{sec:introduction}

The problems \textsc{Dual} and \textsc{Dualization} are classically formulated in terms of monotone Boolean functions represented by prime conjunctive normal forms
(CNFs)~\cite{EGM03}. 
For a Boolean function $f$, the dual of $f$ is a Boolean function $f^d$ satisfying $f^d(\mathbf x) = \neg f(\neg \mathbf x)$ for all Boolean vectors $\mathbf x$, where $\neg \mathbf x$ is the component-wise complement of $\mathbf x$.
Given two monotone Boolean functions $f_\varphi$ and $f_\psi$ represented by prime CNF formulas $\varphi$ and $\psi$, the
problem \textsc{Dual} asks whether $f_\psi = f_\varphi^d$.
The problem \textsc{Dualization} asks, given the prime CNF $\varphi$ representing $f_\varphi$, to compute the prime CNF $\psi$ representing $f_\psi = f^d_\varphi$. 
These formulations have the following hypergraph interpretation.
For a hypergraph $\mathcal H = (V, \mathcal E)$, let
\[
  f_{\mathcal H}(\mathbf x)=\bigwedge_{E \in \mathcal E}\bigvee_{v \in E}x_v.
\]
A vertex set $X \subseteq V$ is a \emph{transversal} of $\mathcal H$ if it intersects every hyperedge in $\mathcal H$, and it is \emph{minimal} if no proper subset is a transversal.
We denote by $\Tr(\mathcal H)$ the family of minimal transversals of $\mathcal H$.
A truth assignment is a minimal satisfying assignment of $f_{\mathcal H}$ if and only if the set of variables assigned true is a minimal transversal of $\mathcal H$.
Hence, we have
\[
  f_{\mathcal H}(\mathbf x)=
  \bigvee_{X \in \Tr(\mathcal H)}\bigwedge_{v \in X}x_v,
  \qquad
  f^d_{\mathcal H}(\mathbf x)=
  \bigwedge_{X \in \Tr(\mathcal H)}\bigvee_{v \in X}x_v
\]
for a reduced hypergraph $\mathcal H$.\footnote{A hypergraph is \emph{reduced} if no hyperedge is a proper subset of another.}
Thus, for given reduced hypergraphs $\mathcal H$ and $\mathcal L$, \textsc{Dual} is
equivalent to deciding whether $\mathcal L = \Tr(\mathcal H)$~\cite{EG95,EGM03}.\footnote{Strictly speaking, $\mathcal L$ is a set of hyperedges on $V$. However, we will also refer to it as a hypergraph as long as no confusion arises.}
Similarly, the goal of \textsc{Dualization} is, given a reduced hypergraph $\mathcal H$, to enumerate all minimal transversals of $\mathcal H$.
In this paper, we use this hypergraph formulation of \textsc{Dual} and \textsc{Dualization}.

For \textsc{Dualization}, since the number of minimal transversals in $\mathcal H$ can be exponentially larger than its size $||\mathcal H|| \coloneqq |V| + |\mathcal E|$, the running time is measured in an \emph{output-sensitive manner}.
We say that an enumeration algorithm runs in \emph{output-polynomial time} if it runs in $N^{\bigoh(1)}$ time, where $N$ is the combined size of the input and output.
For general hypergraphs, polynomial-time solvability of \textsc{Dual} is equivalent to output-polynomial-time enumeration of minimal transversals~\cite{BI95,EGM03}. 
The decision problem \textsc{Dual} therefore captures the central completeness question in the transversal enumeration problem \textsc{Dualization}.

These problems have been studied for several decades (see \cite{EiterMG08} for a survey), and their general (output-)polynomial time complexity has remained unsettled~\cite{EG95,Mary26}.
Notably, Fredman and Khachiyan~\cite{FK96} developed a quasipolynomial-time algorithm for \textsc{Dual}, with running time $N^{\bigoh(\log N/\log\log N)}$.
This algorithm currently achieves the best-known upper bound on running time in general.
The gap between this upper bound and polynomial time has motivated extensive research for tractable restrictions. 
Classical results include output-polynomial-time enumeration for hypergraphs of bounded rank, and polynomial-delay enumeration for hypergraphs of bounded degree or bounded incidence treewidth~\cite{EGM03}. 
More recently, Mary~\cite{Mary26} obtained incremental polynomial enumeration for hypergraphs of bounded VC-dimension, bringing many earlier positive cases under a common structural condition.

A closely related line of research concerns the enumeration of minimal dominating sets in a graph. 
The problem of enumerating all minimal dominating sets in a graph $G$, called \textsc{Minimal Dominating Set Enumeration}, is a special case of \textsc{Dualization}, as it is equivalent to \textsc{Dualization} of the hypergraph defined by the closed neighborhoods of vertices in $G$. 
On the other hand, Kant\'e, Limouzy, Mary, and Nourine~\cite{KLMN14} showed that an output-polynomial-time algorithm for minimal dominating sets, even on co-bipartite graphs, would yield an
output-polynomial-time algorithm for \textsc{Dualization}, meaning that these two problems are ``polynomially equivalent'' to each other.
Within restricted graph classes, there are many polynomial-delay or output-polynomial-time enumeration results, such as polynomial-delay enumeration on chordal graphs~\cite{KLMNU16}, output-polynomial-time enumeration on $K_t$-free graphs for every fixed $t$~\cite{BDHPR20}, and polynomial-delay enumeration on chordal bipartite graphs~\cite{CDG25}. 
These advances exhibit substantial tractable cases while preserving the general problem as a common obstacle.

The underlying difficulty of \textsc{Dualization} also provides a basis for conditional lower bounds for other enumeration problems. 
Such hardness results state that an output-polynomial-time enumeration algorithm for a target problem would yield one for \textsc{Dualization}. 
The equivalence for \textsc{Minimal Dominating Set Enumeration}  extends to \textsc{Minimal Total Dominating Set Enumeration} and \textsc{Minimal Connected Dominating Set Enumeration} in split graphs~\cite{KLMN14}. 
Similar hardness results are obtained for \textsc{Minimal Connected Vertex Cover Enumeration} and \textsc{Minimal Capacitated Vertex Cover Enumeration}, already on $2$-degenerate bipartite graphs~\cite{KKMMO25}.

This role of \textsc{Dualization} can be extended to several other graph problems. 
The authors of~\cite{CGKMUW19} gave a reduction from \textsc{Dualization} to the problem of enumerating minimal induced Steiner subgraphs in split graphs.
This proves that an output-polynomial-time algorithm for enumerating minimal induced Steiner subgraphs in split graphs yields an output-polynomial-time enumeration for \textsc{Dualization}.
Conditional lower bounds for other enumeration problems based on \textsc{Dualization} are given for minimal Steiner node multicuts~\cite{KK20}, and minimal group Steiner trees even on stars~\cite{KKW22}. 
Minimal resolving sets in graphs and minimal geodetic sets in split graphs are also equivalent to \textsc{Dualization}~\cite{BDM25}.
More recently, the same hardness was shown for enumerating $s$-$t$ hyperpaths in directed $BF$-hypergraphs~\cite{KM26}.

Hardness results stemming from \textsc{Dualization} appear across various domains well beyond graph problems, such as artificial intelligence~\cite{Kha95,EM02,dCM22}, databases~\cite{BFS22}, data mining~\cite{BuzzegaCKKP25}, and convex geometry~\cite{AGK11,DNV21}.

\paragraph{Results of this paper.}
This paper gives a deterministic subexponential reduction from \textsc{3SAT} to the
complement of \textsc{Dual}. 
More precisely, the following theorem is established.
\begin{theorem}\label{thm:main}
There is a deterministic algorithm that, given a 3CNF formula $\varphi$ with $n$
variables, constructs hypergraphs $\mathcal H$ and $\mathcal L$ such that $\mathcal L \subseteq \Tr(\mathcal H)$ and 
\[
 \varphi\text{ is satisfiable}\ \Longleftrightarrow\ \mathcal L \neq \Tr(\mathcal H).
\]
Moreover, the sizes of $\mathcal L$ and $\mathcal H$ are bounded by $2^{\bigoh(n^{2/3}(\log n)^{1/3})}$, and they are  constructed in time $2^{\bigoh(n^{2/3}(\log n)^{1/3})}$.
\end{theorem}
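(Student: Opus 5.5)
The plan is to turn the search for an assignment into the search for a \emph{new} minimal transversal. We build a pair $(\mathcal H,\mathcal L)$ with $\mathcal L\subseteq\Tr(\mathcal H)$ guaranteed by construction. Every set in $\Tr(\mathcal H)\setminus\mathcal L$ will then encode a satisfying assignment of $\varphi$, and conversely.

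The parameters come first. Partition the $n$ variables into $k$ blocks $B_1,\dots,B_k$ of size $b=\lceil n/k\rceil$, with
\[
k=\Theta\bigl((n/\log n)^{1/3}\bigr),
\]
so that $b=\Theta(n^{2/3}(\log n)^{1/3})$ and also $k^2\log k=\Theta(n^{2/3}(\log n)^{1/3})$. Each block has $2^b$ partial assignments, and every clause touches at most three blocks. The target size $2^{\bigoh(n^{2/3}(\log n)^{1/3})}$ is then
\[
2^{\bigoh(b)}\cdot 2^{\bigoh(k^2\log k)},
\]
where the first factor pays for listing local assignments explicitly and the second pays for enforcing consistency among the $k$ blocks.

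The construction has three steps.

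\textbf{(1) Composition.} We first record how \textsc{Dual} instances behave under composition of hypergraphs, and the key property is exactness. For vertex-disjoint $\mathcal H_1,\mathcal H_2$, the union satisfies $\Tr(\mathcal H_1\cup\mathcal H_2)=\{X_1\cup X_2\}$, the product of the two families. The join $\{E_1\cup E_2\}$ satisfies $\Tr=\Tr(\mathcal H_1)\cup\Tr(\mathcal H_2)$. Hence the missing minimal transversals of a composed pair are exactly the combinations of missing and present transversals of the parts.

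\textbf{(2) Local gadgets.} For each block $i$ and each partial assignment $\alpha$ of $B_i$, we build a small gadget over vertices indexed by $(i,\alpha)$, for a total of $k2^b$ vertices. A transversal that picks a single local choice per block corresponds to an assignment of all variables. For every clause $C$ and every tuple of local choices on the at most three blocks of $C$ that falsifies $C$, we add a ``forbidding'' hyperedge, or equivalently a member of $\mathcal L$. This costs $2^{\bigoh(b)}$ per clause.

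\textbf{(3) The complementary family.} Inside $\mathcal L$ we must list, explicitly, all minimal transversals that do \emph{not} correspond to consistent one-choice-per-block selections. These are the degenerate selections, for example those picking two choices in a block or a falsifying combination. Listing them naively costs $2^{\Omega(kb)}=2^{\Omega(n)}$, which is too much. To avoid this, we encode the block choices through a Reed--Solomon-type code over $\Fq$ with $q=\Theta(k)$. Each degenerate configuration is then certified by a violation that lives on $\bigoh(k)$ coordinates with values in $\Fq$, that is, by a pattern describable with $\bigoh(k^2\log k)$ bits. Because polynomials of low degree agreeing on enough points coincide, any transversal avoiding all such certificates is forced to be a genuine codeword, i.e.\ a consistent assignment. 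Thus $\mathcal L$ only needs $2^{\bigoh(k^2\log k)}\cdot 2^{\bigoh(b)}$ sets.

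Correctness then follows from the exactness in (1). Minimality and membership of each set in $\mathcal L$ is a local check inside one gadget or certificate, which gives $\mathcal L\subseteq\Tr(\mathcal H)$. A set in $\Tr(\mathcal H)\setminus\mathcal L$ avoids every certificate and every forbidding edge, so it is a satisfying assignment. Conversely, a satisfying assignment yields such a set. The whole construction is explicit, so the time bound matches the size bound.

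The main obstacle is step (3). We need to show that the low-complexity certificates cover \emph{every} minimal transversal that is not a consistent satisfying selection, and that each certificate is itself a minimal transversal. The interaction between the code structure and minimality is delicate: spurious minimal transversals mixing fragments of several codewords must be excluded. I would first try to force such mixtures to contain a pair of distinct choices in the same block, detectable on a short coordinate pattern, by picking the code's evaluation points so that distinct codewords differ on many coordinates.
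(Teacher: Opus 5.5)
There is a genuine gap, and it sits where you place the ``main obstacle'': the proposal never says what the hyperedges of $\mathcal H$ are, so nothing controls $\Tr(\mathcal H)$. The composition identities in step (1) are correct, but the construction never uses them. In step (2), adding a ``forbidding'' hyperedge to $\mathcal H$ is not equivalent to adding a set to $\mathcal L$. Members of $\mathcal L$ must be actual minimal transversals of $\mathcal H$, and they place no constraint on $\Tr(\mathcal H)$. In step (3), describing a degenerate configuration by an $\bigoh(k^2\log k)$-bit pattern does not bound $|\mathcal L|$. The family $\mathcal L$ must contain every degenerate minimal transversal as a vertex set, and without a further mechanism there can be $2^{\Omega(n)}$ of them, as you note yourself. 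The stated code also cannot exist. A Reed--Solomon code over $\Fq$ has length at most $q$, hence at most $q^q=2^{\bigoh(k\log k)}$ codewords for $q=\Theta(k)$. A block, however, has $2^b=2^{\Theta(k^2\log k)}$ local assignments to encode injectively.

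The missing idea is a padding that makes every unwanted minimal transversal \emph{small}. Then $\mathcal L$ can simply be all minimal transversals of size at most $2t$, found by brute force over $|V|^{\bigoh(t)}$ candidate sets, and $\mathcal L\subseteq\Tr(\mathcal H)$ holds by definition.

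The paper starts from $\mathcal H'$, whose edges are the blocks $W_i$ and, for each clause $c$, the clause edge $E_c$ of all $(i,\beta)$ such that $\beta$ makes some literal of $c$ in $B_i$ true. A full selector hits $E_c$ iff its assignment satisfies $c$, so nothing has to be forbidden. Next, the vertices of each $W_i$ are labelled injectively by polynomials $p_u$ of degree less than $t$ over $\Fq$. Each edge $E'_j$ of $\mathcal H'$ gets its own evaluation point $\pi_j$, so $q\ge M=\bigoh(n^3)$ rather than $\Theta(k)$. Each $E'_j$ is then replaced by the $q^k$ edges $E'_j\cup\bigcup_i\{u\in W_i:p_u(\pi_j)\ne a_i\}$, one for each $\mathbf a\in\Fq^k$.

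A set hits all of these edges iff it hits $E'_j$ or contains two vertices of one block whose values at $\pi_j$ differ. Two distinct labels agree on at most $t-1$ points. So a transversal with two vertices in one block contains a transversal of size at most $2t$. Consequently every minimal transversal of size greater than $2t$ has at most one vertex per block and is a valid full selector, i.e., a satisfying assignment. Conversely, each satisfying assignment yields such a transversal of size $k>2t$.

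The resulting sizes are $|\mathcal H|\le k2^b+Mq^k$ and $|\mathcal L|\le(k2^b)^{2t+1}$, subject to $2^b\le q^t$. Balancing $2^{\Theta(t^2\log n)}$ against $q^k=2^{\Theta(n/t)}$ gives $t=\Theta((n/\log n)^{1/3})$. This forces many small blocks, $k=\Theta((n/\log n)^{2/3})$ and $b=\Theta(n^{1/3}(\log n)^{2/3})$, which is the opposite of your parameter regime.
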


As a consequence of \cref{thm:main}, assuming the Exponential Time Hypothesis (ETH), which
excludes $2^{o(n)}n^{O(1)}$-time algorithms for \textsc{3SAT}~\cite{IPZ01,ImpagliazzoP01}, this construction rules out an algorithm for \textsc{Dual} with running time
\[
  N^{o\!\left(\sqrt{\log N/\log\log N}\right)},
\]
where $N = ||\mathcal H|| + ||\mathcal L||$.
It follows that, under ETH, \textsc{Dual} has no polynomial-time algorithm, and \textsc{Dualization} has no output-polynomial-time enumeration algorithm.
The same assumption also rules out output-polynomial enumeration for the aforementioned enumeration problems.

\section{Proof of \cref{thm:main}}\label{sec:construction}

To prove \cref{thm:main}, we first construct a hypergraph $\mathcal H'$ from a 3CNF formula $\varphi$ such that $\varphi$ is satisfiable if and only if $\mathcal H'$ has a transversal $X$ with a certain condition.
This transversal $X$ is a minimal transversal of $\mathcal H'$ but the converse does not hold.

\subsection{Constructing a hypergraph from a 3CNF formula}
Let $\varphi$ be a 3CNF formula with $n$ variables.
Let $C$ be the set of clauses in $\varphi$.
We can assume that $|C| = \bigoh(n^3)$.
Choose a positive integer $b$ with $b \le n$.
Let $k = \lceil n/b \rceil$ and partition the variables into $k$ nonempty balanced blocks $B_1,\ldots,B_k$ such that each block contains at most $b$ variables.
For $1 \le i \le k$, define pairwise disjoint vertex sets
\[
 W_i =\{(i,\beta) : \beta:B_i \to \{0,1\}\}.
\]
Each vertex $(i, \beta)$ represents a partial truth assignment for variables in $B_i$.
Let $V = \bigcup_{1 \le i \le k} W_i$.
A vertex set $X$ is called a \emph{full selector} if it contains exactly one element of each $W_j$ ($1 \le j \le k$).
Observe that for every truth assignment $\alpha$ of $\varphi$, a full selector is defined as
\[
 X_\alpha=\{(i,\alpha|_{B_i}) : 1\le i \le k\},
\]
where $\alpha|_{B_i}$ is the restriction of $\alpha$ to $B_i$, and every full selector $X$ uniquely determines a truth assignment $\alpha$ with $X_\alpha = X$.
Since $|W_i| \le 2^b$, we have $|V| \le k2^{b}$.

We first construct a hypergraph $\mathcal H' = (V, \mathcal E')$ and then extend it to the hypergraph stated in \cref{thm:main}.
The set of hyperedges in $\mathcal H'$ is defined as $\mathcal E' = \{W_i : 1 \le i \le k\} \cup \{E_c : c \in C\}$, where $E_c$ is defined as
\[ 
    E_c = \{(i, \beta) : \text{some literal of } c \text{ in } B_i \text{ is true under partial assignment } \beta\}.
\]

A full selector is \emph{valid} if it is a transversal of $\mathcal H'$.
The following proposition is easy to derive from the definition.
\begin{proposition}\label{prop:sat-selector}
    A truth assignment $\alpha$ is a satisfying assignment of $\varphi$ if and only if $X_\alpha$ is a valid full selector.
\end{proposition}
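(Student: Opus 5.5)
The plan is to unfold the definitions and reduce everything to a clause-by-clause check. By construction, $X_\alpha$ contains exactly one vertex $(i,\alpha|_{B_i})$ of each $W_i$. So it automatically meets every hyperedge $W_i$ of $\mathcal H'$, and it is always a full selector. Being valid is therefore equivalent to $X_\alpha$ intersecting $E_c$ for every clause $c\in C$. It remains to show that, for each fixed clause $c$,
\[
X_\alpha\cap E_c\neq\emptyset \iff c \text{ is satisfied by } \alpha .
\]
Taking the conjunction over all $c\in C$ then gives the proposition.

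For the forward direction, suppose $(i,\alpha|_{B_i})\in X_\alpha\cap E_c$. By the definition of $E_c$, some literal $\ell$ of $c$ has its variable in $B_i$ and is true under the partial assignment $\alpha|_{B_i}$. Since $\alpha|_{B_i}$ agrees with $\alpha$ on $B_i$, the literal $\ell$ is true under $\alpha$, so $\alpha$ satisfies $c$.

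For the converse, suppose $\alpha$ satisfies $c$ via a literal $\ell$ whose variable lies in block $B_i$; such a block exists because the blocks partition the variables. Then $\ell$ is true under $\alpha|_{B_i}$, so $(i,\alpha|_{B_i})\in E_c$. This vertex also lies in $X_\alpha$, so the intersection is nonempty.

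There is no real obstacle here. The only point needing care is that the truth of a literal depends only on the restriction of the assignment to the block containing its variable. This is what makes membership in $E_c$, which is defined blockwise, match satisfaction of $c$ by the global assignment $\alpha$. A clause whose literals are spread over several blocks causes no difficulty, since one satisfied literal suffices on both sides.
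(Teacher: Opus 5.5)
Your proof is correct and is the same definitional unfolding the paper intends; the paper gives no written proof, only noting that the proposition is ``easy to derive from the definition.'' Your key observations are the right ones: $X_\alpha$ is automatically a full selector that meets every $W_i$, and it meets $E_c$ exactly when some literal of $c$ is true under $\alpha$, because a literal's truth depends only on $\alpha|_{B_i}$ for the block $B_i$ containing its variable.
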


If $\alpha$ is a satisfying assignment of $\varphi$, the corresponding vertex set $X_\alpha$ is a minimal transversal of $\mathcal H'$ as every hyperedge $W_i$ is hit by exactly one vertex in $X_\alpha$.
However, the converse does not hold: a minimal transversal may contain more than one vertex in $W_i$. 
The following padding argument prevents this undesired situation, making all such minimal transversals small enough to enumerate within the claimed time bound.

\subsection{Padding hyperedges}\label{sec:padding}

Let $\mathcal H' = (V, E')$ be the hypergraph constructed in the previous subsection, and let $\mathcal E' = \{E'_1, \dots, E'_M\}$.
Fix a prime $q$ and an integer $t \ge 1$ satisfying
\[
 M\le q,\qquad \max_i|W_i|\le q^t,\qquad 2t<k.
\]
We will discuss the choice of these parameters later.

For each $1 \le i \le k$, we label the vertices $u \in W_i$ injectively by polynomials $p_u \in \Fq[x]$ of degree less than $t$. 
Two polynomials $p_u$ and $p_v$ for two vertices $u \in W_i$ and $v \in W_j$ with $i\neq j$ can be identical.
Since $|W_i| \le q^t$, such an injective labeling always exists.
Fix distinct points $\pi_1, \ldots, \pi_M \in \Fq$.
For each $u \in V$ and $1 \le j \le M$, define $h_j(u) = p_u(\pi_j)$.
These functions $h_j$ assign an $M$-dimensional vector $\mathbf h(u) \coloneqq (h_1(u), \dots, h_M(u))$ to each vertex $u \in V$.

\begin{lemma}\label{lem:colors}
For distinct $u,v \in W_i$, we have
\[
 |\{j : h_j(u) = h_j(v)\}|\le t-1.
\]
\end{lemma}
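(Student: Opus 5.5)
The argument is the standard Reed--Solomon distance bound. Fix distinct $u,v \in W_i$. Because the labeling of $W_i$ by polynomials is injective, the labels $p_u$ and $p_v$ differ, so the difference $r \coloneqq p_u - p_v \in \Fq[x]$ is a nonzero polynomial. Both labels have degree less than $t$, hence so does $r$.

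Next I translate the agreement condition into a root condition. By definition $h_j(u) = h_j(v)$ means $p_u(\pi_j) = p_v(\pi_j)$, which is the same as $r(\pi_j) = 0$. The map $j \mapsto \pi_j$ is injective because the points $\pi_1,\dots,\pi_M$ were chosen distinct. Therefore
\[
|\{j : h_j(u) = h_j(v)\}| = |\{\pi_j : r(\pi_j)=0\}|,
\]
and the right-hand side is at most the number of distinct roots of $r$ in $\Fq$.

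Finally, I use the fact that $\Fq$ is a field, since $q$ is prime. A nonzero polynomial of degree $d$ over a field has at most $d$ roots, and here $d \le t-1$. This yields the bound $t-1$.

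No step is a real obstacle. The only point to state explicitly is that the two labels are compared within the same block $W_i$, where the labeling is injective. Labels from different blocks may coincide, which is why the lemma is restricted to $u,v$ in a common $W_i$.
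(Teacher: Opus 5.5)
Your proof is correct and follows the paper's argument: the injective labeling makes $p_u - p_v$ a nonzero polynomial of degree at most $t-1$, an agreement $h_j(u) = h_j(v)$ means that $\pi_j$ is a root of this difference, and the root bound over the field $\Fq$ then gives at most $t-1$ agreements. You also state explicitly that the points $\pi_j$ are distinct, which is what lets you count indices $j$ by roots; the paper uses this implicitly.
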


\begin{proof}
As $p_u \neq p_v$, the difference $p_u-p_v$ is a nonzero polynomial of degree at most $t-1$.
It is well known that every nonzero polynomial of degree at most $d$ has at most $d$ roots~\cite{LidlN97}.
Since $h_j(u) = h_j(v)$ if and only if $\pi_j$ is a root of $p_u - p_v = 0$, the lemma follows.
\end{proof}

Now, we replace each edge $E'_j \in \mathcal E'$ with hyperedges by padding some vertices. 
For $1 \le j \le M$ and a vector $\mathbf{a}=(a_1, \ldots, a_k)\in\Fq^k$, define
\[
 P_{j,\mathbf{a}} =\bigcup_{1 \le i \le k} \{u \in W_i : h_j (u) \ne a_i\},
\]
and $E_{j, \mathbf{a}} = E'_j \cup P_{j,\mathbf{a}}$. 
Let $\mathcal E_j = \{E_{j, \mathbf{a}} : \mathbf{a} \in \Fq^k\}$ and let $\mathcal E = \bigcup_{1 \le j \le M} \mathcal E_j$.
Note that each edge in $\mathcal E_j$ contains $E'_j$ as a subset and $|\mathcal E| \le Mq^k$.
Let $\mathcal H = (V, \mathcal E)$.

The padding is designed to preserve \cref{prop:sat-selector} on full selectors while ensuring that every minimal transversal containing two vertices from some $W_i$ is small.
For an edge $E'_j$, the padding $P_{j, \mathbf a}$ for $\mathbf a = (a_1, \dots, a_k)$ excludes from each $W_i$ precisely those vertices whose vectors have $j$-th coordinate $a_i$.
For a full selector $X = \{u_1, \ldots, u_k\}$ with $u_i \in W_i$, by choosing $a_i = h_j(u_i)$, $X$ misses $P_{j,\mathbf a}$.
Thus, if $X$ misses $E'_j$, it also misses the padded edge $E_{j,\mathbf a}$.
On the other hand, if vectors $\mathbf h(u)$ and $\mathbf h(v)$ of two vertices $u, v \in W_i$ differ in some coordinate~$j$, $u$ and $v$ cannot both be excluded from $P_{j, \mathbf a}$ and therefore $\{u, v\}$ hits every edge in $\mathcal E_j$.
To make such a pair hit almost all families, we use polynomial evaluations to ensure that $\mathbf h(u)$ and $\mathbf h(v)$ agree in at most $t-1$ coordinates.
For a transversal $X$ containing such a pair, the following lemma ensures that there is a vertex subset in $X$ that hits all hyperedges in $\mathcal E_j$.

\begin{lemma}\label{lem:group}
For $1 \le j \le M$, a vertex set $X \subseteq V$ hits every hyperedge in $\mathcal E_j$ if and only if $X \cap E'_j \neq \emptyset$ or there are two vertices $u, v \in X \cap W_i$ for some $i$ such that $h_j(u) \neq h_j(v)$.
In the latter case, $\{u, v\}$ hits every hyperedge in $\mathcal E_j$.
\end{lemma}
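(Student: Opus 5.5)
We prove both directions directly from the definition $E_{j,\mathbf a} = E'_j \cup P_{j,\mathbf a}$. The key observation is that a set $Y \subseteq V$ misses $P_{j,\mathbf a}$ exactly when every vertex $u \in Y \cap W_i$ satisfies $h_j(u) = a_i$, for every $i$. So the question of whether some $\mathbf a$ makes $Y$ avoid the padding reduces to whether the values $h_j$ are constant on each $Y \cap W_i$.

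For the ``if'' direction there are two cases. If $X \cap E'_j \neq \emptyset$, then $X$ hits every $E_{j,\mathbf a}$, since each contains $E'_j$. Otherwise, suppose $u,v \in X \cap W_i$ with $h_j(u) \neq h_j(v)$. Fix any $\mathbf a \in \Fq^k$. Since $a_i$ cannot equal both $h_j(u)$ and $h_j(v)$, at least one of $u,v$ has $h_j \neq a_i$. That vertex lies in $P_{j,\mathbf a} \subseteq E_{j,\mathbf a}$. This argument uses only $u$ and $v$, so it also establishes the final claim that $\{u,v\}$ hits every hyperedge in $\mathcal E_j$.

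For the ``only if'' direction we argue by contrapositive. Assume $X \cap E'_j = \emptyset$ and that for every $i$, all vertices in $X \cap W_i$ share the same value of $h_j$. We construct $\mathbf a$ coordinatewise:
\begin{itemize}
  \item if $X \cap W_i \neq \emptyset$, let $a_i$ be the common value of $h_j$ on $X \cap W_i$;
  \item otherwise, let $a_i \in \Fq$ be arbitrary.
\end{itemize}
Every $u \in X$ then lies in some $W_i$ with $h_j(u) = a_i$, so $u \notin P_{j,\mathbf a}$. Since the $W_i$ partition $V$, each vertex belongs to exactly one $W_i$, so this check covers all of $X$. Together with $X \cap E'_j = \emptyset$, this gives $X \cap E_{j,\mathbf a} = \emptyset$, so $X$ fails to hit some hyperedge of $\mathcal E_j$.

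There is no real obstacle here. The only care needed is choosing $a_i$ for indices with $X \cap W_i = \emptyset$, and noting that the $W_i$ partition $V$, which makes the membership test for $P_{j,\mathbf a}$ well defined. \cref{lem:colors} is not needed for this lemma; it matters only later, when counting how many families $\mathcal E_j$ a pair $\{u,v\}$ hits.
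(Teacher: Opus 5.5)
Your proof is correct and takes the paper's approach: the ``if'' direction uses $E'_j \subseteq E_{j,\mathbf a}$ and the fact that $a_i$ cannot equal both $h_j(u)$ and $h_j(v)$, and the ``only if'' direction is proved by contrapositive, taking $a_i$ to be the common value of $h_j$ on $X \cap W_i$ (arbitrary when this set is empty) so that $X$ misses $E_{j,\mathbf a}$. Your remark that \cref{lem:colors} is not needed here is also accurate.
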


\begin{proof}
Suppose that either condition holds.
If $X \cap E'_j \neq \emptyset$, we can clearly conclude that $X$ hits every hyperedge in $\mathcal E_j$ as they have $E'_j$ as a subset.
If there are two vertices $u, v \in X \cap W_i$ for some $i$ such that $h_j(u) \neq h_j(v)$, 
at least one of $h_j(u) \neq a_i$ or $h_j(v) \neq a_i$ holds for every $\mathbf{a} = (a_1, \dots, a_k) \in \Fq^k$.
Thus, $\{u, v\}$ hits $P_{j, \mathbf{a}}$ for every $\mathbf{a}$.

Conversely, suppose that $X$ has no vertices in $E'_j$ and $h_j(u) = h_j(v)$ for vertices $u, v \in X \cap W_i$ for each $1 \le i \le k$.
Choose the value $a_i$ to be the common value of $h_j(u)$ for $u \in X \cap W_i$ if $X \cap W_i \neq \emptyset$; choose $a_i$ arbitrarily otherwise.
Since a vertex $u \in W_i$ hits $P_{j, \mathbf{a}}$ if and only if $h_j(u) \neq a_i$, $X \cap E'_j = \emptyset$ implies that $X$ does not hit $E_{j, \mathbf{a}}$.
\end{proof}

\begin{lemma}\label{lem:selectors}
Suppose that a vertex set $X$ contains at most one vertex from each $W_i$.
Then, $X$ is a transversal of $\mathcal H$ if and only if it is a valid full selector.
Moreover, every valid full selector is a minimal transversal of $\mathcal H$.
\end{lemma}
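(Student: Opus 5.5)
The plan is to apply \cref{lem:group} family by family. The key observation is that when $X$ has at most one vertex in each $W_i$, the second alternative of \cref{lem:group} can never occur. Under this hypothesis, $X$ hits every hyperedge of $\mathcal E_j$ if and only if $X \cap E'_j \neq \emptyset$. Since $\mathcal E = \bigcup_j \mathcal E_j$, it follows that $X$ is a transversal of $\mathcal H$ if and only if $X$ is a transversal of $\mathcal H'$.

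For the first claim, suppose $X$ is a transversal of $\mathcal H$. Then $X$ is a transversal of $\mathcal H'$. Each $W_i$ is a hyperedge of $\mathcal H'$ (it is some $E'_j$), so $X \cap W_i \neq \emptyset$. Combined with the hypothesis, $X$ contains exactly one vertex of each $W_i$, so $X$ is a full selector. Being a transversal of $\mathcal H'$, it is valid. Conversely, a valid full selector has at most one vertex per $W_i$ and is a transversal of $\mathcal H'$, so by the reduction above it is a transversal of $\mathcal H$.

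For minimality, let $X$ be a valid full selector and $u \in X$ with $u \in W_i$. Then $X \setminus \{u\}$ still has at most one vertex from each block. If it were a transversal of $\mathcal H$, the first part would make it a full selector. This is impossible, since it misses $W_i$. Hence no single vertex can be removed, so $X$ is a minimal transversal of $\mathcal H$.

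There is no real obstacle here. The only point needing care is that the hyperedges $W_i$ appear among the $E'_j$, so they survive as subsets of padded edges. This is what forces a transversal of $\mathcal H$ to meet every block, and \cref{lem:group} delivers that cleanly.
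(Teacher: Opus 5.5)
Your proof is correct and takes essentially the same route as the paper. Both arguments use \cref{lem:group} to see that, for a set with at most one vertex per $W_i$, being a transversal of $\mathcal H$ is equivalent to being a transversal of $\mathcal H'$; the hyperedges $W_i$ then force exactly one vertex per block and rule out every proper subset. Your converse direction is slightly more direct than the paper's, which detours through \cref{prop:sat-selector}, and checking only single-vertex deletions is enough because transversality is upward closed.
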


\begin{proof}
Suppose that $X$ is a transversal of $\mathcal H$.
Since $X$ contains at most one vertex from each $W_i$, by~\cref{lem:group}, it hits $E'_j$ for all $E'_j \in \mathcal E'$.
In particular, $X$ hits $W_i$ for $1 \le i \le k$, which means that it is a valid full selector.

Conversely, suppose that $X$ is a valid full selector.
This immediately implies that $X$ hits $W_i$ for all $1 \le i \le k$.
Moreover, by \cref{prop:sat-selector}, the truth assignment $\alpha$ obtained from $X$ is a satisfying assignment of $\varphi$.
Thus, $X$ hits $E_c$ for all $c \in C$.
Hence, $X$ is a transversal of $\mathcal H'$.
As each hyperedge in $\mathcal H$ contains a hyperedge of $\mathcal H'$, $X$ is a transversal of $\mathcal H$ as well.

Since every proper subset of a valid full selector misses some hyperedge $W_i$, every valid full selector $X$ is a minimal transversal of $\mathcal H'$.
By~\cref{lem:group}, it is also a minimal transversal of $\mathcal H$, as $X$ has at most one vertex from each $W_i$.
\end{proof}

\begin{lemma}\label{lem:small}
If a transversal $X$ of $\mathcal H$ contains two vertices from $W_i$ for some $i$, it
contains a transversal of size at most $2t$. 
\end{lemma}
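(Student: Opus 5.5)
Let $u,v \in X \cap W_i$ be two distinct vertices. We assemble a small subset of $X$ that hits every family $\mathcal E_j$, one family at a time. By \cref{lem:colors}, the set $J_{\mathrm{bad}} = \{j : h_j(u) = h_j(v)\}$ has size at most $t-1$. For every $j \notin J_{\mathrm{bad}}$ we have $h_j(u) \neq h_j(v)$, so the second part of \cref{lem:group} shows that $\{u,v\}$ alone hits every hyperedge in $\mathcal E_j$. Thus $\{u,v\}$ already takes care of all but at most $t-1$ of the families.

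For each remaining index $j \in J_{\mathrm{bad}}$, we use that $X$ hits every hyperedge of $\mathcal E_j$. By the first part of \cref{lem:group}, either $X \cap E'_j \neq \emptyset$ or there exist $i_j$ and $u_j, v_j \in X \cap W_{i_j}$ with $h_j(u_j) \neq h_j(v_j)$. In the first case we choose one vertex $w_j \in X \cap E'_j$; it hits every edge of $\mathcal E_j$, since each contains $E'_j$. In the second case we choose the pair $\{u_j, v_j\}$, which hits $\mathcal E_j$ by \cref{lem:group}. Either way, each $j \in J_{\mathrm{bad}}$ contributes at most two vertices of $X$.

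Let $Y$ be $\{u,v\}$ together with the vertices chosen for each $j \in J_{\mathrm{bad}}$. Then $Y \subseteq X$, $Y$ hits every hyperedge of $\mathcal E = \bigcup_j \mathcal E_j$, and
\[
|Y| \le 2 + 2(t-1) = 2t.
\]
Hence $Y$ is a transversal of $\mathcal H$ of size at most $2t$ contained in $X$, which proves the lemma.

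There is no real obstacle here. The only delicate point is the count: the bound $2t$ is reached exactly because \cref{lem:colors} gives at most $t-1$ bad coordinates, not $t$. We never have to treat the hyperedges $W_i$ separately, since they are among the $E'_j$ and are padded like every other edge of $\mathcal E'$.
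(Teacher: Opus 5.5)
Your proof is correct and matches the paper's argument: $\{u,v\}$ covers every family $\mathcal E_j$ outside the at most $t-1$ agreeing coordinates given by \cref{lem:colors}, and for each remaining family \cref{lem:group} supplies at most two vertices of $X$, for a total of $2+2(t-1)=2t$. Your write-up also avoids the paper's index clash, where it writes $u,v\in W_j$ while $j$ is also the coordinate index.
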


\begin{proof}
Let $u, v \in X \cap W_j$ be distinct and let $I = \{j : h_j(u) = h_j(v)\}$.
By~\cref{lem:colors}, we have $|I| \le t-1$.
For $j \notin I$, we have $h_j(u) \neq h_j(v)$. 
By~\cref{lem:group}, $\{u, v\}$ hits every hyperedge in $\mathcal E_j$ for $j \notin I$.
Let $j \in I$.
Since $X$ is a transversal of $\mathcal H$, by \cref{lem:group}, there is a set $S_j \subseteq X$ of at most two vertices that intersect each edge in $\mathcal E_j$.
Hence, $\{u, v\} \cup \bigcup_{j \in I} S_j \subseteq X$ is a transversal of $\mathcal H$ with size at most $2 + 2|I| \le 2t$.
\end{proof}

\begin{lemma}\label{lemma:decomposition}
Let $\mathcal L = \{X \in \Tr(\mathcal H) : |X| \le 2t\}$.
Then, $\Tr(\mathcal H) \setminus \mathcal L$ is nonempty if and only if $\varphi$ is satisfiable.
\end{lemma}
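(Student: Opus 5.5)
We prove the two directions separately. Both rest on \cref{lem:selectors} and \cref{lem:small}, together with the parameter condition $2t<k$.

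For the forward direction, suppose $X \in \Tr(\mathcal H)\setminus\mathcal L$, so $X$ is a minimal transversal with $|X|>2t$. We split according to whether $X$ contains two vertices from the same $W_i$.
\begin{itemize}
\item If it does, \cref{lem:small} gives a transversal $Y\subseteq X$ with $|Y|\le 2t<|X|$. Then $Y$ is a proper subset of $X$, which contradicts the minimality of $X$.
\item Hence $X$ contains at most one vertex from each $W_i$. By \cref{lem:selectors}, $X$ is a valid full selector, so $X=X_\alpha$ for a unique truth assignment $\alpha$.
\end{itemize}
By \cref{prop:sat-selector}, this $\alpha$ satisfies $\varphi$.

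For the converse, let $\alpha$ be a satisfying assignment of $\varphi$. By \cref{prop:sat-selector}, $X_\alpha$ is a valid full selector. By the ``moreover'' part of \cref{lem:selectors}, $X_\alpha\in\Tr(\mathcal H)$. Since $X_\alpha$ contains exactly one vertex from each $W_i$, we have $|X_\alpha|=k>2t$. So $X_\alpha\notin\mathcal L$, and therefore $\Tr(\mathcal H)\setminus\mathcal L\neq\emptyset$.

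The only delicate point is in the forward direction: \cref{lem:small} yields a transversal of size at most $2t$, and this contradicts minimality only because $|X|>2t$ makes the inclusion strict. The converse needs the strict inequality $2t<k$; this is exactly why that parameter constraint was imposed.
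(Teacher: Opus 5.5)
Your proof is correct and matches the paper's argument. In the forward direction you use \cref{lem:small} plus minimality (strict because $|X|>2t$) to force at most one vertex per $W_i$, identify $X$ as a valid full selector, and conclude via \cref{prop:sat-selector}; the converse uses the ``moreover'' part of \cref{lem:selectors} together with $|X_\alpha|=k>2t$, and the only cosmetic difference is that you cite \cref{lem:selectors} for the full-selector step where the paper cites \cref{lem:group}.
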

\begin{proof}
Suppose that there is a minimal transversal $X$ of $\mathcal H$ that does not belong to $\mathcal L$.
Then, we have $|X| > 2t$.
By~\cref{lem:small} and the minimality of $X$, it contains at most one vertex from each $W_i$.
By~\cref{lem:group}, it is a valid full selector, and hence the corresponding truth assignment satisfies $\varphi$ due to \cref{prop:sat-selector}.

Conversely, let $\alpha$ be a satisfying assignment of $\varphi$.
Then, $X_\alpha$ is a valid full selector due to \cref{prop:sat-selector}.
As $X_\alpha$ contains exactly one vertex from each $W_i$, by~\cref{lem:selectors}, $X_\alpha$ is a minimal transversal of $\mathcal H$.
Moreover, as $|X_\alpha| = k > 2t$, it holds that $X_\alpha \in \Tr(\mathcal H) \setminus \mathcal L$.
\end{proof}

From a 3CNF formula $\varphi$, we can construct the hypergraph $\mathcal H$ as follows.
The vertex set $V$ of $\mathcal H$ has at most $k2^b$ vertices and can be computed in time $2^bn^{\bigoh(1)}$ by enumerating all partial assignments of $B_i$ for all $i$.
The set $\mathcal E'$ of hyperedges in $\mathcal H'$ is defined as $\mathcal E' = \{W_i : 1 \le i \le k\} \cup \{E_c : c \in C\}$.
Each clause hyperedge $E_c$ can be constructed by enumerating all the vertices (partial assignments of variable blocks) and checking whether they satisfy the clause $c$.
Thus, the hyperedges in $\mathcal E'$ can be enumerated in time $\bigoh(|V| \cdot |\mathcal E'|) = |V|\cdot n^{\bigoh(1)}$ as $M = |\mathcal E'| = \bigoh(n^3)$.
For padding hyperedges, we construct functions $h_j$ for each $1 \le j \le M$.
This can be done in time $|V|\cdot (n + q)^{\bigoh(1)}$ by enumerating the coefficients of each polynomial $p_u$ and evaluating them at fixed points $\pi_1, \dots, \pi_M$.
For $1 \le j \le M$, we can construct $\mathcal E_j$ in time $|V| \cdot q^k n^{\bigoh(1)}$, and hence the hypergraph $\mathcal H$ is constructed in total time $|V| \cdot q^{k+ \bigoh(1)} \cdot n^{\bigoh(1)}$.
For constructing $\mathcal L$, we first enumerate all the vertex sets of size at most $2t$ and check whether each set belongs to $\Tr(\mathcal H)$.
As $\mathcal H$ and $\mathcal L$ have at most $Mq^k$ and $|V|^{2t}$ hyperedges, respectively, the construction can be done in time $|V|^{2t+\bigoh(1)}\cdot |\mathcal E| = |V|^{2t+\bigoh(1)} \cdot q^kn^{\bigoh(1)}$.
Therefore, from $\varphi$, we can construct hypergraphs $\mathcal H$ and $\mathcal L$ with
\[
    ||\mathcal H|| \le k2^b + Mq^k, \qquad ||\mathcal L|| \le (k2^b)^{2t+1}
\]
in time $(k2^b)^{2t + \bigoh(1)} \cdot q^{k + \bigoh(1)} \cdot n^{\bigoh(1)}$.

\subsection{Setting parameters}\label{sec:size}

To complete the proof of \cref{thm:main}, we need to appropriately choose integer parameters $b, k, q, t$.
Recall that $n$ is the number of variables in $\varphi$.
We can assume that $n \ge 32$.
We set the parameters
\[
 \begin{gathered}
 \ell = \lceil \log_2(n+1) \rceil,\qquad
 t = \max\{z \in \mathbb Z_{\ge 1} : 4z^3 \ell \le n\},\qquad 
 b = t\ell, \qquad 
 k = \lceil n/b \rceil.
 \end{gathered}
\]
The positive integer $t$ is well defined since $4\ell \le n$ holds when $n \ge 32$.
Let $Q = \max\{M, 2^\ell\}$. 
Choose a prime $q$ with $Q \le q < 2Q$.
Such a prime always exists for $Q > 1$.
Since $M = \bigoh(n^3)$ and $2^\ell = \bigoh(n)$, we can find such a prime $q$ in polynomial time in $n$.
In particular, $q = n^{\bigoh(1)}$.

The three requirements used in the previous subsection directly follow from
\[
 M \le Q \le q,\qquad
 |W_j| \le 2^{t\ell} \le q^t,\qquad
 2t < 4t^2 \le \frac{n}{t\ell} \le k.
\]
As $t$ is taken as the maximum $z$ satisfying $4z^3\ell \le n$, we have
\[
    t \le \left(\frac{n}{4\ell}\right)^{1/3} \le t + 1,
\]
and hence $t = \Theta((n / \log n)^{1/3})$, $b = \Theta(n^{1/3}(\log n)^{2/3})$, and $k = \Theta((n/\log n)^{2/3})$.
Therefore, it holds that
\[
    ||\mathcal H|| \le k2^b + Mq^k = 2^{\bigoh(n^{1/3}(\log n)^{2/3})} + n^{\bigoh((n / \log n)^{2/3})} = 2^{\bigoh(n^{2/3}(\log n)^{1/3})} 
\]
and
\[
    ||\mathcal L|| \le (k2^b)^{2t+1} = 2^{\bigoh(n^{2/3}(\log n)^{1/3})}.
\]
The running time for constructing these hypergraphs is bounded by
\[
    (k2^b)^{2t + \bigoh(1)} \cdot q^{k + \bigoh(1)} \cdot n^{\bigoh(1)} = 2^{\bigoh(n^{2/3}(\log n)^{1/3})}.
\]
This completes the proof of \cref{thm:main}.

We remark that the output hypergraph $\mathcal H$ may not be reduced.
Since we can easily remove all hyperedges that properly contain another hyperedge in $||\mathcal H||^{\bigoh(1)}$ time, \cref{thm:main} also holds for reduced hypergraphs.

\subsection{Consequences of \cref{thm:main}}

The reduction presented in \cref{thm:main} proves the following corollary.
\begin{corollary}\label{cor:duality}
Assuming ETH, \textsc{Dual} has no algorithm with running time
\[
 N^{o(\sqrt{\log N/\log\log N})},
\]
where $N = ||\mathcal H|| + ||\mathcal L||$.
In particular, \textsc{Dual} has no polynomial-time algorithm unless ETH fails.
\end{corollary}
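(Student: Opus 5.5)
We argue by contraposition: assuming an algorithm $A$ for \textsc{Dual} running in time $N^{g(N)}$ with $g(N) = o(\sqrt{\log N/\log\log N})$, we obtain a $2^{o(n)}$-time algorithm for \textsc{3SAT}, contradicting ETH. Given a 3CNF formula $\varphi$ with $n$ variables, we first apply \cref{thm:main} to build $\mathcal H$ and $\mathcal L$ in time $2^{\bigoh(n^{2/3}(\log n)^{1/3})}$, reducing $\mathcal H$ as remarked after the parameter setting (this does not change $\Tr(\mathcal H)$). We then run $A$ on $(\mathcal H,\mathcal L)$. Since $\mathcal L \subseteq \Tr(\mathcal H)$, the answer ``$\mathcal L = \Tr(\mathcal H)$'' holds exactly when $\varphi$ is unsatisfiable, so complementing the output decides $\varphi$. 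It is harmless that $\mathcal L$ might contain non-reduced families, since it consists of minimal transversals and is therefore automatically an antichain.

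The core is the running-time calculation. We have $N \le 2^{c\, n^{2/3}(\log n)^{1/3}}$ for a constant $c$, so $\log N \le c\, n^{2/3}(\log n)^{1/3}$. Because $N\ge k$ is polynomially related to $n$ from below, we also get $\log\log N = \Theta(\log n)$. A lower bound suffices: $\log N \ge \log n$, so $\log\log N \ge \log\log n$. In fact we can use monotonicity to evaluate $g$ only at the upper bound. Write $g(N) = \varepsilon(N)\sqrt{\log N/\log\log N}$ with $\varepsilon(N)\to 0$. Then
\[
\log(N^{g(N)}) = \varepsilon(N)\,\frac{(\log N)^{3/2}}{(\log\log N)^{1/2}}.
\]
The function $x\mapsto x^{3/2}/(\log x)^{1/2}$ is increasing for large $x$. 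Plugging in $\log N \le c\, n^{2/3}(\log n)^{1/3}$ and $\log\log N = \Theta(\log n)$ (valid when $N$ is near the upper bound) gives
\[
(\log N)^{3/2}/(\log\log N)^{1/2} = \bigoh\!\left(n (\log n)^{1/2}/(\log n)^{1/2}\right) = \bigoh(n).
\]
Multiplying by $\varepsilon(N)$ yields $o(n)$, provided $\varepsilon(N)\to0$ as $n\to\infty$.

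The main obstacle is this last point. The bound on $N$ is only an upper bound, so $N$ might be small and $\varepsilon(N)$ need not be small. We handle it by padding: we add isolated dummy vertices (or, equivalently, argue with $N' = \max\{N, 2^{n^{2/3}}\}$) so that $N$ is sandwiched between $2^{n^{2/3}}$ and $2^{c n^{2/3}(\log n)^{1/3}}$. With this sandwich, $N\to\infty$ with $n$ and $\log\log N = \Theta(\log n)$ rigorously. Alternatively, we can assume without loss of generality that $g$ is nondecreasing and replace $N$ by its upper bound.

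The total time is then $2^{\bigoh(n^{2/3}(\log n)^{1/3})} + 2^{o(n)} = 2^{o(n)}$, contradicting ETH. Finally, polynomial time $N^{\bigoh(1)}$ is a special case of $N^{o(\sqrt{\log N/\log\log N})}$, which gives the ``in particular'' claim.
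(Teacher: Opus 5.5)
Your proof is correct and follows the paper's argument. Apply \cref{thm:main}, bound $\log N = \bigoh(n^{2/3}(\log n)^{1/3})$, and use monotonicity of $x^{3/2}/\sqrt{\log x}$ to get $(\log N)^{3/2}/\sqrt{\log\log N} = \bigoh(n)$; the vanishing factor then gives a $2^{o(n)}$ algorithm for \textsc{3SAT}.

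The padding step is harmless but not needed. Monotonicity uses only the upper bound on $\log N$, and $\varepsilon(N)\to 0$ already follows from $N \ge k \to \infty$. Separately, your inference of $\log\log N=\Theta(\log n)$ from $N\ge k$ is not valid, but your argument never relies on it.
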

\begin{proof}
Let $\mathcal H$ and $\mathcal L$ be the hypergraphs constructed from a 3CNF formula with $n$ variables in the proof of \cref{thm:main}.
Let $s = n^{2/3}(\log n)^{1/3}$. 
Then, we have
\[
    \log N = \bigoh(n^{2/3}(\log n)^{1/3}) = \bigoh(s),
\]
and their construction can be done in $2^{o(n)}$ time.
Since a function $f(x) \coloneqq x^{3/2}/\sqrt{\log x}$ is monotonically increasing for large~$x$,
\[
 \frac{(\log N)^{3/2}}{\sqrt{\log\log N}} = \bigoh\left(\frac{s^{3/2}}{\sqrt{\log s}}\right) = \bigoh\left(\frac{n \sqrt{\log n}}{\sqrt{\log n + \log\log n}}\right) = \bigoh(n).
\]
Hence, an algorithm for \textsc{Dual} with running time $N^{o(\sqrt{\log N/\log\log N})}$ would yield an algorithm for \textsc{3SAT} with running time $2^{o(n)}$. 
\end{proof}

The following well-known reduction gives an ETH-based quasipolynomial time lower bound for \textsc{Dualization}.

\begin{proposition}[\cite{BI95,EGM03}]\label{prop:dualization}
If there is an algorithm $A$ for \textsc{Dualization} running in time $T(N)$ with $N = ||\mathcal H|| + ||\Tr(H)||$, \textsc{Dual} can be solved in $\bigoh(T(M) + M^{\bigoh(1)})$ time with $M = ||\mathcal H|| + ||\mathcal L||$.
\end{proposition}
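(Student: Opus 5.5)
The plan is the classical "run and stop" simulation. We are given an instance $(\mathcal H,\mathcal L)$ of \textsc{Dual} with $M=||\mathcal H||+||\mathcal L||$. We first check in $M^{\bigoh(1)}$ time that $\mathcal L\subseteq\Tr(\mathcal H)$: each $X\in\mathcal L$ must hit every hyperedge, and for every $v\in X$ the set $X\setminus\{v\}$ must miss some hyperedge. If this check fails, we answer ``no''.

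Otherwise we run $A$ on $\mathcal H$ with a clock. The clock is set to $T(2M)$ steps. The idea is that if $\Tr(\mathcal H)=\mathcal L$, then $||\Tr(\mathcal H)||=||\mathcal L||$. Hence the combined input--output size is at most $M$, and $A$ halts within $T(M)\le T(2M)$ steps (assuming $T$ is monotone).
\begin{enumerate}
\item If $A$ halts within the budget, we compare its output family with $\mathcal L$. This costs $M^{\bigoh(1)}$ time after sorting, since the output produced in $T(2M)$ steps has size at most $T(2M)$; here we account for comparison in the same budget.
\item If $A$ has not halted within the budget, then $\Tr(\mathcal H)\ne\mathcal L$ and we answer ``no''.
\end{enumerate}

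To sharpen this, we can avoid needing the output to be small. We inspect each set output by $A$ as it is produced, and stop as soon as $A$ outputs a set not in $\mathcal L$. Such a set certifies $\mathcal L\ne\Tr(\mathcal H)$, because every output set is a minimal transversal. Membership testing against $\mathcal L$ costs $M^{\bigoh(1)}$ per output set. At most $|\mathcal L|+1$ sets are inspected before we either stop or $A$ halts, so the overhead is $M^{\bigoh(1)}$.

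The main obstacle is the case where $\Tr(\mathcal H)\supsetneq\mathcal L$ and $A$ runs long without emitting anything new. In that case $A$'s running time on the true instance may exceed $T(M)$, so we cannot wait for it. This is exactly why the time cutoff is needed. If $\Tr(\mathcal H)=\mathcal L$, the true $N$ equals $M$, so $A$ must finish by time $T(M)$. Therefore, if after $T(M)$ steps $A$ has neither halted nor been stopped, we may correctly conclude $\mathcal L\neq\Tr(\mathcal H)$. The total time is $\bigoh(T(M)+M^{\bigoh(1)})$. The only assumption needed is that $T(M)$ can be computed, or that $T$ is time-constructible, which we take as standard.
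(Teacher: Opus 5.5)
Your argument is correct and is essentially the paper's: simulate $A$ for $T(M)$ steps with an early abort so that the final comparison costs only $M^{\bigoh(1)}$, and conclude $\mathcal L\neq\Tr(\mathcal H)$ on timeout because $\mathcal L=\Tr(\mathcal H)$ forces $N=M$. The paper aborts on the $(|\mathcal L|+1)$-st output rather than on the first output outside $\mathcal L$, which is an equivalent variation; you should, however, drop your initial $T(2M)$ budget and its comparison claim, since $T(2M)$ need not be $\bigoh(T(M)+M^{\bigoh(1)})$ and an output of size $T(2M)$ cannot be compared in $M^{\bigoh(1)}$ time, and your final $T(M)$-cutoff version with online membership tests is the one that actually proves the statement.
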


\begin{proof}
We simulate algorithm $A$ in $T(M)$ steps, rejecting immediately when it generates more than $|\mathcal L|$ sets.
If it terminates without such a rejection, we have $\Tr(\mathcal H)$ and directly compare $\mathcal L$ and $\Tr(\mathcal H)$ in $M^{\bigoh(1)}$ time.
Suppose otherwise.
If $\mathcal L = \Tr(\mathcal H)$, then $N = M$, and hence $A$ terminates within $T(M)$ steps.
Thus, we conclude that $\mathcal L \neq \Tr(\mathcal H)$.
\end{proof}

\begin{corollary}\label{cor:dualization}
Assuming ETH, \textsc{Dualization} has no algorithm with running time
\[
 N^{o(\sqrt{\log N/\log\log N})},
\]
where $N = ||\mathcal H|| + ||\Tr(\mathcal H)||$.
In particular, \textsc{Dualization} has no output-polynomial-time algorithm unless ETH fails.
\end{corollary}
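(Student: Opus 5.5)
The plan is to combine \cref{prop:dualization} with \cref{cor:duality}, arguing by contradiction. Suppose \textsc{Dualization} had an algorithm $A$ running in time $T(N) = N^{g(N)}$ with $g(N) = o(\sqrt{\log N/\log\log N})$, where $N = ||\mathcal H|| + ||\Tr(\mathcal H)||$. We may assume $g$ is nondecreasing, or at least that $T$ is nondecreasing, so that bounds in terms of $N$ transfer to larger arguments. By \cref{prop:dualization}, this yields an algorithm for \textsc{Dual} running in time $\bigoh(T(M) + M^{\bigoh(1)})$ with $M = ||\mathcal H|| + ||\mathcal L||$.

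The first step is to check that this running time is still $M^{o(\sqrt{\log M/\log\log M})}$. The term $T(M)$ is of this form by assumption. The additive $M^{\bigoh(1)}$ term is absorbed once we write the total bound as $M^{\max\{g(M), c\}}$. The exponent then satisfies $\max\{g(M),c\} = o(\sqrt{\log M/\log\log M})$ only if the constant $c$ is negligible, which holds because $\sqrt{\log M/\log\log M}\to\infty$. This contradicts \cref{cor:duality}. A cleaner route is to apply the argument only to the instances produced by \cref{thm:main} and repeat the computation in the proof of \cref{cor:duality}. The resulting \textsc{3SAT} algorithm runs in time $2^{\bigoh(n^{2/3}(\log n)^{1/3})} + 2^{o(n)} = 2^{o(n)}$, which violates ETH.

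The one point needing care is the bookkeeping in \cref{prop:dualization}: the simulation needs $T(M)$ to be computable, or at least the time budget to be known. Here $M$ is explicit and $T$ is a fixed function, so we may assume $T$ is time-constructible. Alternatively, for the reduced instances we can use the explicit budget $2^{\bigoh(n^{2/3}(\log n)^{1/3})\cdot o(\sqrt{\cdot})}$ derived from $n$. I expect this technicality, not any real mathematics, to be the main obstacle.

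The ``in particular'' claim follows because an output-polynomial algorithm runs in time $N^{\bigoh(1)}$, and $\bigoh(1) = o(\sqrt{\log N/\log\log N})$.
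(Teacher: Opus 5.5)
Your argument is correct and follows the paper's intended route: you feed the simulation of \cref{prop:dualization} into \cref{cor:duality} and absorb the additive $M^{\bigoh(1)}$ term into the exponent. Your monotonicity assumption is unnecessary, since $N = M$ exactly when $\mathcal L = \Tr(\mathcal H)$. The time-budget issue you flag is real and is glossed over in the paper, but your ``explicit'' fallback budget still contains an unspecified $o(\cdot)$; the clean fix is to fix $\varepsilon>0$ and use the computable budget $M^{\varepsilon\sqrt{\log M/\log\log M}}$ (valid for all large $M$), which yields an $\bigoh(2^{\delta n})$-time \textsc{3SAT} algorithm for every $\delta>0$ and hence contradicts ETH.
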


\Cref{cor:dualization} transfers to every enumeration problem that is at least as hard as
\textsc{Dualization} with respect to output-polynomial time.
We first state the transfer for minimal dominating sets, where the reduction of
Kant\'e, Limouzy, Mary, and Nourine~\cite{KLMN14} is explicit enough to preserve the quantitative bound.
For a graph $G$, let $\MD(G)$ denote the family of minimal dominating sets of $G$.

\begin{corollary}\label{cor:mds}
Assuming ETH, \textsc{Minimal Dominating Set Enumeration} has no algorithm with running time
\[
 N^{o(\sqrt{\log N/\log\log N})},
\]
where $N = \size{G} + \size{\MD(G)}$ is the combined size of the input graph $G$ and the output,
even when $G$ is restricted to co-bipartite graphs.
\end{corollary}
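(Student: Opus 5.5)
We will derive \cref{cor:mds} from \cref{cor:dualization} via the reduction of Kant\'e, Limouzy, Mary, and Nourine~\cite{KLMN14}, checking only that it is polynomially size-preserving in both directions. Given a hypergraph $\mathcal H=(V,\mathcal E)$, we build the co-bipartite graph $G_{\mathcal H}$ as follows. Its vertex set is $V\cup \mathcal E'$, where $\mathcal E'$ contains one vertex per hyperedge together with a constant number of auxiliary vertices. Both $V$ and $\mathcal E'$ are made into cliques, and $v\in V$ is joined to $e\in \mathcal E'$ exactly when $v\in E$ (possibly with the complemented incidence, as in~\cite{KLMN14}). Then $\size{G_{\mathcal H}} = \size{\mathcal H}^{\bigoh(1)}$, and $G_{\mathcal H}$ is constructed in that time.

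The key structural fact, stated in~\cite{KLMN14}, is a decomposition
\[
\MD(G_{\mathcal H}) = \Tr(\mathcal H)\ \cup\ \mathcal D,
\]
where every set in $\mathcal D$ has size at most $2$, or more generally lies in a family that is explicitly enumerable in time $\size{\mathcal H}^{\bigoh(1)}$. In particular, $|\MD(G_{\mathcal H})| \le |\Tr(\mathcal H)| + \size{\mathcal H}^{\bigoh(1)}$. Consequently
\[
N_G \coloneqq \size{G_{\mathcal H}}+\size{\MD(G_{\mathcal H})} \le N^{c}
\]
for a constant $c$, where $N=\size{\mathcal H}+\size{\Tr(\mathcal H)}$.

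Now suppose an algorithm enumerates $\MD(G)$ for co-bipartite $G$ in time $N_G^{g(N_G)}$ with $g(x)=o(\sqrt{\log x/\log\log x})$. Given $\mathcal H$, we build $G_{\mathcal H}$, run this algorithm, and discard the sets not belonging to $\Tr(\mathcal H)$. Membership can be checked in polynomial time, or read off from the decomposition. This solves \textsc{Dualization} in time $N^{c\,g(N^c)}+N^{\bigoh(1)}$.

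The remaining step is the asymptotics. Since $\log N^c = c\log N$, we have $\sqrt{\log N^c/\log\log N^c} = \Theta(\sqrt{\log N/\log\log N})$, so $c\,g(N^c) = o(\sqrt{\log N/\log\log N})$. We may assume $g$ is nondecreasing by replacing it with a monotone majorant that is still $o(\cdot)$. The resulting \textsc{Dualization} running time contradicts \cref{cor:dualization} under ETH.

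The main obstacle is the structural claim: one must verify from~\cite{KLMN14} that the extra minimal dominating sets are only polynomially many and are recognizable. A merely ``output-polynomial equivalence'' statement would not suffice, since it could hide a superpolynomial blow-up in the exponent. I would re-derive the decomposition directly. A minimal dominating set either intersects the clique $\mathcal E'$, in which case minimality forces it to have size at most $2$ (one vertex of each clique), or it is contained in $V$, in which case it dominates $\mathcal E'$ exactly when it is a transversal. So the minimal dominating sets contained in $V$ are precisely $\Tr(\mathcal H)$. The auxiliary vertices are there to rule out degenerate single-vertex dominators.
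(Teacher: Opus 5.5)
Your overall route is the one the paper intends. The paper gives no separate proof of \cref{cor:mds}; it only remarks that the reduction of~\cite{KLMN14} is explicit enough to preserve the quantitative bound, which amounts to composing it with \cref{cor:dualization}. Your asymptotic step is also correct: a blow-up $N_G\le N^{c}$ only rescales $\sqrt{\log N/\log\log N}$ by a constant, after passing to a monotone majorant of $g$. The gap is in the structural step that you yourself flag as the crux. Your case analysis says that a minimal dominating set meeting the clique $\mathcal E'$ has size at most $2$, but this holds only if it also meets $V$.

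Write $y_E$ for the vertex of $\mathcal E'$ representing $E$. A set $D\subseteq\mathcal E'$ dominates $\mathcal E'$ for free, and it dominates $V$ exactly when the corresponding hyperedges cover $V$. So the minimal dominating sets contained in $\mathcal E'$ are the minimal edge covers of $\mathcal H$, and their number is not controlled by $\size{\mathcal H}+\size{\Tr(\mathcal H)}$. For example, take $V=\{v_1,\dots,v_m,z_1,z_2\}$ with hyperedges $\{v_i,z_1\}$ and $\{v_i,z_2\}$ for $1\le i\le m$. This hypergraph is reduced and $\Tr(\mathcal H)=\{\{z_1,z_2\},\{v_1,\dots,v_m\}\}$. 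Yet picking one of the two hyperedges for each $i$, with both types used, gives $2^m-2$ minimal covers, hence $2^m-2$ minimal dominating sets inside $\mathcal E'$. So the bound $|\MD(G_{\mathcal H})|\le|\Tr(\mathcal H)|+\size{\mathcal H}^{\bigoh(1)}$, and with it $N_G\le N^{c}$, is false for the graph you describe.

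Auxiliary vertices placed in $\mathcal E'$ cannot repair this. They are dominated by the clique and play no role in the minimality of any $D\subseteq\mathcal E'$ that avoids them. The single-vertex dominators you say they are meant to exclude are harmless anyway, since there are at most $|V|+|\mathcal E|$ of them.

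The fix is to drop those auxiliaries and instead add one new vertex $x_0$ to the clique on the $V$ side that lies in no hyperedge, i.e.\ has no neighbour in $\mathcal E'$. Assume $\mathcal E\neq\emptyset$ (the other case is trivial). Then:
\begin{enumerate}
\item No subset of $\mathcal E'$ dominates $x_0$, so no minimal dominating set lies inside $\mathcal E'$.
\item A minimal dominating set inside $V\cup\{x_0\}$ cannot contain $x_0$, and it is exactly a minimal transversal of $\mathcal H$.
\item Any dominating set meeting both cliques contains a dominating pair $\{v,y_E\}$, so every remaining minimal dominating set is such a pair.
\end{enumerate}
Hence $|\MD(G_{\mathcal H})|\le|\Tr(\mathcal H)|+(|V|+1)|\mathcal E|$, and the rest of your argument goes through. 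Also drop the hedge about complemented incidence: your identification of dominating subsets of $V$ with transversals of $\mathcal H$ relies on the plain incidence $v\sim y_E\iff v\in E$.
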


More generally, \cref{cor:dualization} applies to every enumeration problem for which an
output-polynomial-time algorithm would yield one for \textsc{Dualization}.

\begin{corollary}\label{cor:transfer}
Assuming ETH, none of the following problems admits an output-polynomial-time algorithm:
\textsc{Minimal Dominating Set Enumeration} on co-bipartite graphs, 
\textsc{Minimal Total Dominating Set Enumeration}, and
\textsc{Minimal Connected Dominating Set Enumeration} on split graphs~\cite{KLMN14};
\textsc{Minimal Connected Vertex Cover Enumeration} and 
\textsc{Minimal Capacitated Vertex Cover Enumeration} on $2$-degenerate bipartite graphs~\cite{KKMMO25};
enumeration of minimal induced Steiner subgraphs on split graphs~\cite{CGKMUW19};
enumeration of minimal Steiner node multicuts~\cite{KK20};
enumeration of minimal group Steiner trees on stars~\cite{KKW22};
enumeration of minimal resolving sets, and of minimal geodetic sets on split graphs~\cite{BDM25};
and enumeration of $s$-$t$ hyperpaths in directed $BF$-hypergraphs~\cite{KM26}.
\end{corollary}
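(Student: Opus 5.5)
The plan is to combine \cref{cor:dualization} with the reductions cited in the statement. The point to extract from each cited reduction is that it is \emph{output-polynomial preserving}. Concretely, given a hypergraph $\mathcal H$, it builds in time $\size{\mathcal H}^{\bigoh(1)}$ an instance $I$ of the target problem. Its solutions are then linked to $\Tr(\mathcal H)$ in two ways. First, the number and total size of the solutions of $I$ are bounded by $(\size{\mathcal H}+\size{\Tr(\mathcal H)})^{\bigoh(1)}$. Second, $\Tr(\mathcal H)$ can be recovered from the solution family in time polynomial in the combined size.

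I would state this as a generic transfer lemma. Suppose a target problem $\Pi$ admits such a reduction, and $\Pi$ has an algorithm running in time $N'^{c}$ for combined size $N'$. Then \textsc{Dualization} can be solved in time $(\size{\mathcal H}+\size{\Tr(\mathcal H)})^{\bigoh(c)}$. This is an output-polynomial algorithm, which contradicts \cref{cor:dualization} under ETH. The only care needed is that the recovery step must not blow up superpolynomially. The combined size on the \textsc{Dualization} side is $N$, and the instance size on the $\Pi$ side is $N^{\bigoh(1)}$, so any polynomial in $N'$ is a polynomial in $N$.

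Next, I would go through the list and check, citation by citation, that the known reductions have exactly this form.
\begin{itemize}
\item Kant\'e et al.~\cite{KLMN14} give such reductions for minimal dominating sets on co-bipartite graphs, and for total and connected dominating sets on split graphs.
\item The reductions in~\cite{KKMMO25} cover connected and capacitated vertex covers on $2$-degenerate bipartite graphs.
\item The reductions in~\cite{CGKMUW19,KK20,KKW22,BDM25,KM26} cover the remaining problems.
\end{itemize}
Each cited paper states its result as ``an output-polynomial algorithm for $\Pi$ yields one for \textsc{Dualization}.'' Composing that implication with \cref{cor:dualization} gives the corollary. For output-polynomiality we need no quantitative tracking beyond polynomial overheads.

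The main obstacle I expect concerns reductions where the target instance has extra solutions that do not correspond to minimal transversals, for example the hyperpath or Steiner-type problems. There we must verify that the number of such spurious solutions is polynomially bounded in $\size{\mathcal H}+\size{\Tr(\mathcal H)}$. Otherwise an output-polynomial algorithm for $\Pi$ would not give one for \textsc{Dualization}. I would rely on the cited works, which establish exactly this bound as part of their hardness claims. I would then note that the conclusion is only the qualitative ``no output-polynomial algorithm'' statement. This sidesteps any need to preserve the exponent $\sqrt{\log N/\log\log N}$, which would survive only under polynomial, rather than merely output-polynomial, size blow-up.
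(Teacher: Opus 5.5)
Your argument is correct and matches the paper's: the corollary follows by composing \cref{cor:dualization} with the implication established in each cited work, namely that an output-polynomial-time algorithm for the target problem yields one for \textsc{Dualization}. The single-instance template in your transfer lemma (polynomially many spurious solutions, polynomial-time recovery) is stronger than needed. The paper relies only on that black-box implication, which you also invoke at the end, so you do not have to check that every cited reduction has exactly that form.
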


\section{Discussion}
The results in this paper rule out (output-)polynomial-time algorithms for \textsc{Dual} and \textsc{Dualization} under the Exponential Time Hypothesis (ETH) and transfer this obstruction to many related enumeration problems. This paper leaves several questions about the necessary complexity assumption, the boundary of tractable graph classes, and the relationships among dualization-hard problems.

The principal open question would be whether the qualitative conclusion can be obtained under a weaker assumption, such as $\mathrm{P} \neq \mathrm{NP}$.
The reduction used in \cref{thm:main} is indeed insufficient for this implication.
The quasipolynomial-time upper bound of Fredman and Khachiyan~\cite{FK96} also indicates that a conventional hardness proof would be particularly challenging. 
Indeed, coNP-hardness of \textsc{Dual} under polynomial-time many-one reductions would imply $\mathrm{NP}\subseteq\mathrm{QP}$, where $\mathrm{QP}$ denotes deterministic quasipolynomial time. This consequence is not excluded by $\mathrm{P}\neq\mathrm{NP}$ alone, although it would contradict ETH.
Establishing the desired implication may therefore require an approach that does not rely on proving coNP-completeness.

The lower bound also gives a complexity-theoretic reason to pursue structural restrictions on \textsc{Minimal Dominating Set Enumeration}. 
Positive results already cover substantially different and wide classes: Minimal dominating sets can be
enumerated with polynomial delay in chordal graphs~\cite{KLMNU16} and in output-polynomial time in $K_t$-free graphs for every fixed $t$, including all bipartite graphs~\cite{BDHPR20}.
Moreover, as the closed neighborhood hypergraph of a weakly $\gamma$-closed graph has bounded VC-dimension~\cite{LokshtanovS21} for every fixed $\gamma$, the algorithm of Mary~\cite{Mary26} yields an incremental-polynomial bound for \textsc{Minimal Dominating Set Enumeration} on weakly $\gamma$-closed graphs. 
Nevertheless, output-polynomial-time enumeration remains open for comparability graphs, graphs with no induced $C_4$, and graphs with no induced $2K_3$, where $2K_3$ denotes the disjoint union of two triangles~\cite{BDMN26}.

These classes suggest concrete extensions of existing algorithms.
For comparability graphs, incremental-polynomial-time enumeration is known when the underlying poset has bounded dimension~\cite{BDMN26}; removing this restriction would also extend the positive result for bipartite graphs.
For $C_4$-free graphs, the question is whether tractability extends beyond chordal graphs when longer induced cycles are allowed. 
For $2K_3$-free graphs, the known algorithm for $2K_2$-free graphs provides another natural point of
comparison~\cite{BDHPR20}. 
The bounded-VC-dimension algorithm of Mary~\cite{Mary26} gives a further general source of tractability through closed-neighborhood hypergraphs, but does not seem to settle these three classes.
In the opposite direction, co-bipartite graphs already capture the full difficulty of general \textsc{Dualization}~\cite{KLMN14}. 
Consequently, the presented results rule out output-polynomial-time enumeration under ETH for every graph class
containing all co-bipartite graphs, including perfect graphs.

There is also a separate question about strengthening the running time bound within tractable classes. 
For example, bipartite graphs already admit an output-polynomial-time algorithm, while polynomial delay remains open; polynomial delay has recently been obtained for chordal bipartite graphs~\cite{CDG25}.

The reductions discussed in the introduction transfer the exclusion of output-polynomial-time algorithms under ETH to a broad range of enumeration problems.
A further goal is to determine which of these problems also admit a converse reduction to \textsc{Dualization} that preserves output-polynomial time. 
Such equivalences are known for minimal dominating sets~\cite{KLMN14}, minimal resolving sets in general graphs, and minimal geodetic sets in split graphs~\cite{BDM25}. 
Identifying further equivalences requires control over the size of both the transformed instance and its output.

\section{AI disclosure}
The proof of \cref{thm:main} was originally discovered by GPT-6 Astra (OpenAI). The authors subsequently reconstructed and revised the argument and rewrote its exposition. 
The same AI model was also used to prepare drafts of several parts of the manuscript. 
The authors have verified the correctness of the results and take full responsibility for all mathematical claims and the final manuscript.

\section*{Acknowledgement}
This work was supported by JSPS KAKENHI Grant Numbers
JP22K12285, 
JP23K28034, 
JP24K15204, 
JP24H00686, 
JP24H00697, 
JP25K00136, 
JP25K03080, 
JP25K21273. 

\printbibliography

@book{LidlN97,
  author    = {Lidl, Rudolf and Niederreiter, Harald},
  title     = {Finite Fields},
  edition   = {2nd},
  publisher = {Cambridge University Press},
  series    = {Encyclopedia of Mathematics and its Applications},
  volume    = {20},
  year      = {1997},
  isbn      = {978-0-521-39231-0},
  doi       = {10.1017/CBO9780511525926}
}

@article{BI95,
  author = {Bioch, J. C. and Ibaraki, Toshihide},
  title = {Complexity of Identification and Dualization of Positive {Boolean} Functions},
  journal = {Information and Computation},
  volume = {123},
  number = {1},
  pages = {50--63},
  year = {1995},
  doi = {10.1006/inco.1995.1157},
  url = {https://repub.eur.nl/pub/55247}
}

@article{EG95,
  author = {Eiter, Thomas and Gottlob, Georg},
  title = {Identifying the Minimal Transversals of a Hypergraph and Related Problems},
  journal = {SIAM Journal on Computing},
  volume = {24},
  number = {6},
  pages = {1278--1304},
  year = {1995},
  doi = {10.1137/S0097539793250299},
  url = {https://epubs.siam.org/doi/10.1137/S0097539793250299}
}

@article{FK96,
  author = {Fredman, Michael L. and Khachiyan, Leonid},
  title = {On the Complexity of Dualization of Monotone Disjunctive Normal Forms},
  journal = {Journal of Algorithms},
  volume = {21},
  number = {3},
  pages = {618--628},
  year = {1996},
  doi = {10.1006/jagm.1996.0062}
}

@article{EGM03,
  author = {Eiter, Thomas and Gottlob, Georg and Makino, Kazuhisa},
  title = {New Results on Monotone Dualization and Generating Hypergraph Transversals},
  journal = {SIAM Journal on Computing},
  volume = {32},
  number = {2},
  pages = {514--537},
  year = {2003},
  doi = {10.1137/S009753970240639X},
  eprint = {cs/0204009},
  archivePrefix = {arXiv},
  url = {https://arxiv.org/abs/cs/0204009}
}

@misc{Mary26,
  author = {Mary, Arnaud},
  title = {Enumeration of Minimal Transversals of Hypergraphs of Bounded {VC}-Dimension},
  year = {2026},
  note = {Version 4, revised 30 January 2026; first posted in 2024},
  eprint = {2407.00694},
  archivePrefix = {arXiv},
  primaryClass = {math.CO},
  doi = {10.48550/arXiv.2407.00694},
  url = {https://arxiv.org/abs/2407.00694v4}
}

@article{KLMN14,
  author = {Kant{\'e}, Mamadou Moustapha and Limouzy, Vincent and Mary, Arnaud and Nourine, Lhouari},
  title = {On the Enumeration of Minimal Dominating Sets and Related Notions},
  journal = {SIAM Journal on Discrete Mathematics},
  volume = {28},
  number = {4},
  pages = {1916--1929},
  year = {2014},
  doi = {10.1137/120862612},
  eprint = {1407.2053},
  archivePrefix = {arXiv},
  url = {https://arxiv.org/abs/1407.2053}
}

@inproceedings{KLMNU16,
  author = {Kant{\'e}, Mamadou Moustapha and Limouzy, Vincent and Mary, Arnaud and Nourine, Lhouari and Uno, Takeaki},
  title = {A Polynomial Delay Algorithm for Enumerating Minimal Dominating Sets in Chordal Graphs},
  booktitle = {Graph-Theoretic Concepts in Computer Science ({WG} 2015)},
  series = {Lecture Notes in Computer Science},
  volume = {9224},
  pages = {138--153},
  publisher = {Springer},
  year = {2016},
  doi = {10.1007/978-3-662-53174-7_11},
  url = {https://link.springer.com/chapter/10.1007/978-3-662-53174-7_11}
}

@article{BDHPR20,
  author = {Bonamy, Marthe and Defrain, Oscar and Heinrich, Marc and Pilipczuk, Micha{\l} and Raymond, Jean-Florent},
  title = {Enumerating Minimal Dominating Sets in {$K_t$}-Free Graphs and Variants},
  journal = {ACM Transactions on Algorithms},
  volume = {16},
  number = {3},
  articleno = {39},
  pages = {39:1--39:23},
  numpages = {23},
  year = {2020},
  doi = {10.1145/3386686},
  eprint = {1810.00789},
  archivePrefix = {arXiv},
  url = {https://arxiv.org/abs/1810.00789}
}

@inproceedings{CDG25,
  author = {Castelo, Emanuel and Defrain, Oscar and Gomes, Guilherme C. M.},
  title = {Enumerating Minimal Dominating Sets and Variants in Chordal Bipartite Graphs},
  booktitle = {19th International Symposium on Algorithms and Data Structures ({WADS} 2025)},
  series = {Leibniz International Proceedings in Informatics (LIPIcs)},
  volume = {349},
  pages = {15:1--15:15},
  publisher = {Schloss Dagstuhl -- Leibniz-Zentrum f{\"u}r Informatik},
  year = {2025},
  doi = {10.4230/LIPIcs.WADS.2025.15},
  url = {https://drops.dagstuhl.de/entities/document/10.4230/LIPIcs.WADS.2025.15}
}

@article{KKMMO25,
  author = {Kobayashi, Yasuaki and Kurita, Kazuhiro and Mann, Kevin and Matsui, Yasuko and Ono, Hirotaka},
  title = {Enumerating Minimal Vertex Covers and Dominating Sets with Capacity and/or Connectivity Constraints},
  journal = {Algorithms},
  volume = {18},
  number = {2},
  pages = {112},
  year = {2025},
  doi = {10.3390/a18020112},
  eprint = {2308.16426},
  archivePrefix = {arXiv},
  url = {https://www.mdpi.com/1999-4893/18/2/112}
}

@article{BFS22,
  author = {Bl{\"a}sius, Thomas and Friedrich, Tobias and Schirneck, Martin},
  title = {The Complexity of Dependency Detection and Discovery in Relational Databases},
  journal = {Theoretical Computer Science},
  volume = {900},
  pages = {79--96},
  year = {2022},
  doi = {10.1016/j.tcs.2021.11.020},
  eprint = {2103.13331},
  archivePrefix = {arXiv},
  url = {https://arxiv.org/abs/2103.13331}
}

@article{IPZ01,
  author = {Impagliazzo, Russell and Paturi, Ramamohan and Zane, Francis},
  title = {Which Problems Have Strongly Exponential Complexity?},
  journal = {Journal of Computer and System Sciences},
  volume = {63},
  number = {4},
  pages = {512--530},
  year = {2001},
  doi = {10.1006/jcss.2001.1774},
  url = {https://cseweb.ucsd.edu/~paturi/myPapers/pubs/ImpagliazzoPaturiZane_2001_jcss.pdf}
}

@inproceedings{CGKMUW19,
  author = {Conte, Alessio and Grossi, Roberto and Kant{\'e}, Mamadou Moustapha and Marino, Andrea and Uno, Takeaki and Wasa, Kunihiro},
  title = {Listing Induced {Steiner} Subgraphs as a Compact Way to Discover {Steiner} Trees in Graphs},
  booktitle = {44th International Symposium on Mathematical Foundations of Computer Science ({MFCS} 2019)},
  series = {Leibniz International Proceedings in Informatics (LIPIcs)},
  volume = {138},
  pages = {73:1--73:14},
  publisher = {Schloss Dagstuhl -- Leibniz-Zentrum f{\"u}r Informatik},
  year = {2019},
  doi = {10.4230/LIPIcs.MFCS.2019.73},
  url = {https://drops.dagstuhl.de/entities/document/10.4230/LIPIcs.MFCS.2019.73}
}

@inproceedings{KK20,
  author = {Kurita, Kazuhiro and Kobayashi, Yasuaki},
  title = {Efficient Enumerations for Minimal Multicuts and Multiway Cuts},
  booktitle = {45th International Symposium on Mathematical Foundations of Computer Science ({MFCS} 2020)},
  series = {Leibniz International Proceedings in Informatics (LIPIcs)},
  volume = {170},
  pages = {60:1--60:14},
  publisher = {Schloss Dagstuhl -- Leibniz-Zentrum f{\"u}r Informatik},
  year = {2020},
  doi = {10.4230/LIPIcs.MFCS.2020.60},
  url = {https://drops.dagstuhl.de/entities/document/10.4230/LIPIcs.MFCS.2020.60}
}

@inproceedings{KKW22,
  author = {Kobayashi, Yasuaki and Kurita, Kazuhiro and Wasa, Kunihiro},
  title = {Linear-Delay Enumeration for Minimal {Steiner} Problems},
  booktitle = {Proceedings of the 41st ACM SIGMOD-SIGACT-SIGAI Symposium on Principles of Database Systems ({PODS} 2022)},
  pages = {301--313},
  publisher = {Association for Computing Machinery},
  year = {2022},
  doi = {10.1145/3517804.3524148},
  eprint = {2010.11462},
  archivePrefix = {arXiv},
  url = {https://arxiv.org/abs/2010.11462}
}

@article{BDM25,
  author = {Bergougnoux, Benjamin and Defrain, Oscar and {Mc Inerney}, Fionn},
  title = {Enumerating Minimal Solution Sets for Metric Graph Problems},
  journal = {Algorithmica},
  volume = {87},
  pages = {712--735},
  year = {2025},
  doi = {10.1007/s00453-025-01300-4},
  eprint = {2309.17419},
  archivePrefix = {arXiv},
  url = {https://arxiv.org/abs/2309.17419}
}

@inproceedings{KM26,
  author = {Kurita, Kazuhiro and Mann, Kevin},
  title = {On the Complexity of Hyperpath and Minimal Separator Enumeration in Directed Hypergraphs},
  booktitle = {{WALCOM}: Algorithms and Computation},
  series = {Lecture Notes in Computer Science},
  volume = {16444},
  pages = {560--575},
  publisher = {Springer},
  year = {2026},
  doi = {10.1007/978-981-95-7127-7_37},
  eprint = {2507.07528},
  archivePrefix = {arXiv},
  url = {https://link.springer.com/chapter/10.1007/978-981-95-7127-7_37}
}

@article{Kha95,
  author = {Khardon, Roni},
  title = {Translating between {Horn} Representations and Their Characteristic Models},
  journal = {Journal of Artificial Intelligence Research},
  volume = {3},
  pages = {349--372},
  year = {1995},
  doi = {10.1613/jair.183},
  url = {https://homes.luddy.indiana.edu/rkhardon/PUB/jair95-translating.pdf}
}

@inproceedings{EM02,
  author = {Eiter, Thomas and Makino, Kazuhisa},
  title = {On Computing All Abductive Explanations},
  booktitle = {Proceedings of the Eighteenth National Conference on Artificial Intelligence ({AAAI} 2002)},
  pages = {62--67},
  publisher = {AAAI Press},
  year = {2002},
  url = {https://cdn.aaai.org/AAAI/2002/AAAI02-010.pdf}
}

@inproceedings{dCM22,
  author = {de Colnet, Alexis and Marquis, Pierre},
  title = {On the Complexity of Enumerating Prime Implicants from {Decision-DNNF} Circuits},
  booktitle = {Proceedings of the Thirty-First International Joint Conference on Artificial Intelligence ({IJCAI} 2022)},
  pages = {2583--2590},
  publisher = {International Joint Conferences on Artificial Intelligence Organization},
  year = {2022},
  doi = {10.24963/ijcai.2022/358},
  url = {https://www.ijcai.org/proceedings/2022/358}
}

@article{DNV21,
  author = {Defrain, Oscar and Nourine, Lhouari and Vilmin, Simon},
  title = {Translating between the Representations of a Ranked Convex Geometry},
  journal = {Discrete Mathematics},
  volume = {344},
  number = {7},
  pages = {112399},
  year = {2021},
  doi = {10.1016/j.disc.2021.112399},
  eprint = {1907.09433},
  archivePrefix = {arXiv},
  url = {https://arxiv.org/abs/1907.09433}
}

@article{AGK11,
  author = {Allamigeon, Xavier and Gaubert, St{\'e}phane and Katz, Ricardo D.},
  title = {Tropical Polar Cones, Hypergraph Transversals, and Mean Payoff Games},
  journal = {Linear Algebra and its Applications},
  volume = {435},
  number = {7},
  pages = {1549--1574},
  year = {2011},
  doi = {10.1016/j.laa.2011.02.004},
  eprint = {1004.2778},
  archivePrefix = {arXiv},
  url = {https://arxiv.org/abs/1004.2778}
}

@article{EiterMG08,
  author       = {Thomas Eiter and
                  Kazuhisa Makino and
                  Georg Gottlob},
  title        = {Computational aspects of monotone dualization: {A} brief survey},
  journal      = {Discret. Appl. Math.},
  volume       = {156},
  number       = {11},
  pages        = {2035--2049},
  year         = {2008},
  url          = {https://doi.org/10.1016/j.dam.2007.04.017},
  doi          = {10.1016/J.DAM.2007.04.017},
  bibsource    = {dblp computer science bibliography, https://dblp.org}
}

@article{BuzzegaCKKP25,
  author       = {Giovanni Buzzega and
                  Alessio Conte and
                  Yasuaki Kobayashi and
                  Kazuhiro Kurita and
                  Giulia Punzi},
  title        = {The Complexity of Maximal Common Subsequence Enumeration},
  journal      = {Proc. {ACM} Manag. Data},
  volume       = {3},
  number       = {2},
  pages        = {115:1--115:19},
  year         = {2025},
  url          = {https://doi.org/10.1145/3725252},
  doi          = {10.1145/3725252},
  bibsource    = {dblp computer science bibliography, https://dblp.org}
}

@article{BDMN26,
  author = {Bonamy, Marthe and Defrain, Oscar and Micek, Piotr and Nourine, Lhouari},
  title = {Enumerating Minimal Dominating Sets in the (In)comparability Graphs of Bounded Dimension Posets},
  journal = {Discrete Mathematics},
  volume = {349},
  number = {4},
  pages = {114904},
  year = {2026},
  doi = {10.1016/j.disc.2025.114904},
  eprint = {2004.07214},
  archivePrefix = {arXiv},
  url = {https://arxiv.org/abs/2004.07214}
}

@inproceedings{LokshtanovS21,
  author       = {Daniel Lokshtanov and
                  Vaishali Surianarayanan},
  editor       = {Miko{\l}aj Boja\'{n}czyk and
                  Chandra Chekuri},
  title        = {Dominating Set in Weakly Closed Graphs is Fixed Parameter Tractable},
  booktitle    = {41st {IARCS} Annual Conference on Foundations of Software Technology
                  and Theoretical Computer Science, {FSTTCS} 2021, Virtual Conference,
                  December 15-17, 2021},
  series       = {LIPIcs},
  volume       = {213},
  pages        = {29:1--29:17},
  publisher    = {Schloss Dagstuhl - Leibniz-Zentrum f{\"{u}}r Informatik},
  year         = {2021},
  url          = {https://doi.org/10.4230/LIPIcs.FSTTCS.2021.29},
  doi          = {10.4230/LIPICS.FSTTCS.2021.29},
  bibsource    = {dblp computer science bibliography, https://dblp.org}
}

@article{ImpagliazzoP01,
  author       = {Russell Impagliazzo and
                  Ramamohan Paturi},
  title        = {On the Complexity of $k$-{SAT}},
  journal      = {J. Comput. Syst. Sci.},
  volume       = {62},
  number       = {2},
  pages        = {367--375},
  year         = {2001},
  url          = {https://doi.org/10.1006/jcss.2000.1727},
  doi          = {10.1006/JCSS.2000.1727},
  bibsource    = {dblp computer science bibliography, https://dblp.org}
}

\end{document}